\documentclass[a4paper,cleveref,USenglish,autoref, thm-restate]{lipics-v2021}

\usepackage[utf8]{inputenc}
\usepackage{url}
\usepackage{float}
\usepackage{amsmath}
\usepackage{hyperref}
\usepackage{graphicx}
\usepackage{markdown}
\usepackage{amsthm}
\usepackage{appendix}
\usepackage{xcolor}
\usepackage{tikz}
\usetikzlibrary{arrows}
\usetikzlibrary{patterns}
\usetikzlibrary{positioning}
\usepackage{multirow}
\usepackage{algorithm}
\usepackage{algorithmicx}
\usepackage[noend]{algpseudocode}
\usepackage{appendix}
\usepackage{tikz}
\usepackage{xspace}
\usepackage{booktabs}
\usepackage{subcaption}
\usepackage{multicol}
\usepackage{siunitx}

\bibliographystyle{plainurl}
\title{On Orchestrating Parallel Broadcasts for Distributed Ledgers}

\author{Peiyao Sheng\footnote{Both authors contributed equally to the paper.}}{University of Illinois Urbana-Champaign}{psheng2@illinois}{}{}
\author{Chenyuan Wu$^1$}{University of Pennsylvania}{wucy@seas.upenn.edu}{}{}
\author{Dahlia Malkhi}{University of California, Santa Barbara \and Chainlink Labs}{dahliamalkhi@gmail.com}{}{}
\author{Michael K. Reiter}{Duke University \and Chainlink Labs}{michael.reiter@duke.edu}{}{}
\author{Chrysoula Stathakopoulou}{Chainlink Labs}{stchrysa@gmail.com}{}{}
\author{Michael Wei}{VMware}{michael@wei.email}{}{}
\author{Maofan Yin}{University of California, Santa Barbara}{tederminant@gmail.com}{}{}
\authorrunning{Peiyao S et al.}
\Copyright{Peiyao Sheng, Chenyuan Wu, Dahlia Malkhi, Michael K. Reiter, Chrysoula Stathakopoulou, Michael Wei, and Maofan Yin}
\ccsdesc[500]{Computer systems organization~Reliability}
\keywords{replication, consensus, fault tolerance, blockchain}
\relatedversion{}
\acknowledgements{This work was conducted when all authors were working at Chainlink Labs. We thank Gregory Neven for his help with the discussions and proofreading.}
\nolinenumbers
\begin{document}
\maketitle

\begin{abstract}
This paper introduces and develops the concept of ``ticketing'', through which atomic broadcasts are orchestrated by nodes in a distributed system. The paper studies different ticketing regimes that allow parallelism, yet prevent slow nodes from hampering overall progress. It introduces a hybrid scheme which combines managed and unmanaged ticketing regimes, striking a balance between adaptivity and resilience. The performance evaluation demonstrates how managed and unmanaged ticketing regimes benefit throughput in systems with heterogeneous resources both in static and dynamic scenarios, with the managed ticketing regime performing better among the two as it adapts better.
Finally, it demonstrates how using the hybrid ticketing regime performance can enjoy both the adaptivity of the managed regime and the liveness guarantees of the unmanaged regime.
\end{abstract}

\newcommand{\peiyao}[1]{{\color{red}{ [Peiyao: #1]}}}
\newcommand{\peiyaoadd}[1]{{\color{red}{ #1}}}
\newcommand{\peiyaorm}[1]{{\color{red}{ \st{#1}}}}
\newcommand{\DM}[1]{{\color{magenta}{ [Dahlia: #1]}}}
\newcommand{\MKR}[1]{{\color{orange}{ [Mike: #1]}}}
\newcommand{\TY}[1]{{\color{blue}{ [Ted: #1]}}}
\newcommand{\CY}[1]{{\color{brown}{ [Chenyuan: #1]}}}
\newcommand{\chrysa}[1]{{\color{violet}{ [Chrysa: #1]}}}
\newcommand{\chrysaadd}[1]{{\color{violet}{ #1}}}
\newcommand{\chrysarm}[1]{{\color{violet}{ \st{#1}}}}

\algnewcommand{\LeftComment}[1]{\Statex {\color{teal}\qquad\textbf{\(\triangleright\) #1}}} 
\newcommand{\signed}[1]{\langle #1 \rangle}
\newcommand{\GN}[1]{{\color{teal}{ [Greg: #1]}}}

\newcommand{\sparagraph}[1]{\vspace{1mm}\noindent {\bf #1}\xspace}
\newcommand{\myparagraph}[1]{\paragraph*{#1}}

\algdef{SxnE}[IF]{Upon}{EndUpon}[1]{\textbf{upon event}\ #1\ \algorithmicdo}
\algdef{SxnE}[IF]{UponR}{EndUponR}[1]{\textbf{upon receiving}\ #1\ \algorithmicdo}
\algdef{SxnE}[IF]{Init}{EndInit}{\textbf{Init:}}

\newcommand{\passtrFullName}{unmanaged ticketing regime\xspace}
\newcommand{\passtrShortName}{UTR\xspace}
\newcommand{\actrFullName}{managed ticketing regime\xspace}
\newcommand{\actrShortName}{MTR\xspace}
\newcommand{\ourtrFullName}{hybrid ticketing regime\xspace}
\newcommand{\ourtrFullNameCap}{Hybrid Ticketing Regime\xspace}
\newcommand{\ourtrShortName}{HTR\xspace}
\newcommand{\gswlong}{Global Sliding-Window\xspace}
\newcommand{\gswshort}{GSW\xspace}
\newcommand{\mswlong}{Maximum Sliding-Window\xspace}
\newcommand{\mswshort}{MSW\xspace}
\newcommand{\unwritten}{\textsf{unwritten}\xspace}
\newcommand{\finalized}{\textsf{finalized}\xspace}
\newcommand{\committed}{\textsf{committed}\xspace}
\newcommand{\sysname}{{\sf Ticketing}\xspace}
\newcommand{\pacemaker}{\mathsf{Pacemaker}}
\newcommand{\tckting}{ticketing\xspace}
\newcommand{\ticketmaster}{ticketing-server\xspace}
\newcommand{\ticketmasters}{ticketing-servers\xspace}

\newcommand{\valid}{\mathsf{valid}}
\newcommand{\invalid}{\mathsf{invalid}}
\newcommand{\undefined}{\mathsf{undefined}}
\newcommand{\propose}{\mathsf{Propose}}
\newcommand{\broadcast}{\mathsf{broadcast}}
\newcommand{\abc}{\mathsf{ABC}}
\newcommand{\seal}{\mathsf{Seal}}
\newcommand{\vote}{\mathsf{Vote}}
\newcommand{\commit}{\mathsf{Commit}}
\newcommand{\vt}{\mathsf{verifyTicket}}
\newcommand{\submit}{{\sf submit} }
\newcommand{\chk}{{\sf check} }
\newcommand{\wt}{{\sf write} }
\newcommand{\rd}{{\sf read} }
\newcommand{\bbca}{{\sf BBCA} }
\newcommand{\cons}{{\sf CONS} }
\newcommand{\bcast}{{\sf BBCA-BCAST} }
\newcommand{\bcommit}{{\sf BBCA-COMMIT} }
\newcommand{\bprobe}{{\sf BBCA-PROBE} }
\newcommand{\cparticipate}{{\sf CONS-PARTICIPATE} }
\newcommand{\ccommit}{{\sf CONS-COMMIT} }
\newcommand{\quota}{{\it maxIndividualPending}\xspace }
\newcommand{\mBA}{GSW\xspace} %
\newcommand*\circled[1]{\tikz[baseline=(char.base)]{
            \node[shape=circle,draw,inner sep=.1pt] (char) {\textbf{#1}};}}

\definecolor{LightCyan}{rgb}{0.88,1,1}

\section{ Introduction}

In state machine replication, operations are organized into a totally ordered sequence through an atomic broadcast protocol~\cite{schneider_cs90}.
In this paper, we are interested primarily in Byzantine fault-tolerant (BFT) atomic broadcast protocols in partial-synchrony, for which solutions are myriad, but share certain ingredients.  
In sequential leader-based protocols, one process at a time is designated the leader, and the leader proposes (blocks of) operations/transactions for the next available (i.e., not yet occupied) slot in the sequence.  
After decades of advances in scaling-\textbf{up} leader-based solutions, recent advances increase throughput by scaling-\textbf{out} and allowing parallel proposing to form a \textit{block-DAG}, e.g., SwirlDS~\cite{baird2016swirlds}, Blockmania~\cite{danezis2018blockmania}, Aleph~\cite{gkagol2019aleph}, and Narwhal/Tusk~\cite{danezis2022narwhal}. 
In block-DAG protocols, all nodes propose in parallel for the next group of slots, and then the entire group simultaneously commits. 
In both the sequential-leader and block-DAG paradigms, the slots in which proposed transactions might settle are implicitly left to be the next available slots in the sequence.  

In this paper, we introduce the concept of ``\textit{\tckting}'' to explicitly manage the slots in which proposed transactions might settle.   
We stress that ticketing is separate from the mechanics by which consensus is reached on the operation in a slot.  
Rather, we leverage the protocol by which slot finalization (and commitment) is performed as a black box.  

The goal of \tckting is to orchestrate atomic broadcasts by nodes in a distributed system with parallelism, yet prevent slow nodes from hampering overall progress.
Tickets capture the right to propose transactions to be committed to the totally ordered sequence, and \tckting refers to the method of orchestrating the assignment of privileges for slots in the sequence.
Several properties factor into the success of a \tckting scheme. 
For example, 
we want to allow parallel proposing but throttle fast proposers from depleting system resources.
We want to prevent bad (or even malicious) proposers from slowing down progress. 
And we want all of this to dynamically adapt to changing system conditions. 
These properties are summarized in our problem definition in~\Cref{sec:ticketing}. 

In the BFT literature, when viewed as a \tckting regime, the prevailing approach for orchestrating proposals is through a sequential leader replacement regime. 
The vast majority of protocols employ a uniform regime that rotates leaders in a round-robin manner or via a randomized lottery among all nodes. This requires all honest nodes to participate uniformly.

Adaptive sequential leader replacement methods based on reputation were introduced first in
Carousel~\cite{cohen2022aware} for sequential BFT protocols and later, in the context of block-DAG, in Shoal~\cite{spiegelman2023shoal}, Hammerhead~\cite{tsimos2023hammerhead}, and Mysticeti~\cite{babel2023mysticeti}.
All of the approaches above do not allow more than one-third of the participants to opt-out of participating, and (in the case of block-DAG protocols) do not orchestrate parallel proposing for nodes with varying speeds. 
These approaches may be categorized as \textit{unmanaged} as they are governed by a distributed protocol.

On the other side, in the crash fault-tolerant (CFT) atomic broadcast literature,
distributed shared logs like CORFU~\cite{balakrishnan2012corfu} demonstrated excellent performance with a \textit{managed} \tckting approach.  
In such systems, there is a designated \textit{``\ticketmaster''} that is responsible for assigning proposers to slots.
The \ticketmaster orchestrates proposing but is decoupled from the consensus protocol that finalizes slots.  
Thus, the \ticketmaster primarily serves as a performance enhancement tool, whereas replication via the consensus protocol guarantees safety. 
A similar approach was recently explored for BFT settings in BBCA-Ledger~\cite{stathakopoulou2023bbca}. 

The benefit of this approach is that it can drive latency down to the limit by allowing parallel broadcasts, yet commits can happen as soon as they are delivered.
More concretely, it allows broadcasts to become \textit{finalized} out-of-order, because they have pre-designated slots.
A finalized slot becomes \textit{committed} when the slot preceding it is committed. 
Under good conditions, this can happen instantaneously.  
In order to address \textit{``holes,''} which might be left in the sequence by bad ticket-holders and prevent higher slots from committing, each slot may be finalized by consensus with a special $\bot$ value after an expiration period.

Managed \tckting embodies several desirable properties, including  
(i) automatically adapting to faulty/slow nodes, (ii) supporting parallel proposing, and (iii) permitting participants to opt-out from proposing.

One seeming drawback of managed \tckting would be introducing a centralized bottleneck, namely the \ticketmaster. 
Surprisingly, we demonstrate in~\Cref{sec:eval} that despite this, managed \tckting has excellent performance because it prevents other sources of slowness. 
Additionally, we prove in~\Cref{sec:protocol} (\Cref{th:utilization}) that under crash-failures only, after a bounded ``warm-up'' segment, managed \tckting prevents any holes from forming in the sequence. 

The second drawback, unique to the Byzantine setting, is the threat of a bad \ticketmaster. 
To tackle this problem, we introduce a dual managed/unmanaged regime called \ourtrFullNameCap (\ourtrShortName), a flexible ticketing regime that transitions between the two without sacrificing consistency. 
\ourtrFullNameCap thus strikes a balance between adaptiveness and resilience. 

The performance evaluation demonstrates how managed and unmanaged ticketing regimes benefit throughput in systems with heterogeneous resources both in static and dynamic scenarios, with the managed ticketing regime performing better of the two as it adapts more effectively.
Finally it demonstrates how using the hybrid ticketing regime performance can enjoy both the adaptivity of the managed regime and the liveness guarantees of the unmanaged regime.

\section{System Overview}
\label{sec:overview}
At a high level, this work tackles the classic problem of \textit{log replication} in permissioned settings; for completeness, the fault model and problem definition are provided in \Cref{sec:model}. 

We consider standard BFT atomic broadcast, but with the additional flexibility of allowing proposers to inject values (blocks) into log slots in parallel, while allowing individual proposals to become committed immediately rather than delaying to commit proposals in batches.
To accomplish this, first, the protocol for individual slots should exhibit a property referred to as \textit{out-of-order finality}, on which we elaborate in \Cref{sec:eject}.
Second, and the principal focus of this paper, injecting proposals into slots should be orchestrated wisely: the goal is to allow parallelism while preventing contention and while adapting to varying workloads and dynamic conditions. This leads us to introduce in \Cref{sec:ticketing} the notion of a \textit{\tckting} regime and formulate a set of desirable \tckting properties.

\subsection{Model} \label{sec:model}
Our system involves a network of $n$ nodes $P= \{0,1,\cdots, n-1\}$. Within this network, up to $f$ nodes may be faulty. Crash faults stall the node permanently while Byzantine nodes act in arbitrarily malicious ways.  Nodes that are neither crashed nor Byzantine are correct.

We assume partially synchronous communication~\cite{dwork1988consensus}, indicating that there exists an unknown Global Synchronization Time (GST), after which the communication delays within the network are bounded by $\Delta$. Each communication channel is authenticated, and each node has a public identity established by Public Key Infrastructure (PKI). The notation $\signed{m}_p$ denotes a message $m$ signed using the public key of node $p\in P$. 

The system's nodes implement atomic broadcast via a replicated log. We denote the data structure maintained by each node as $log$, and $log[sn]$ represents the \textit{slot} in the log with slot number $sn$.
We assume that the replicated log exposes a basic interface with a $\broadcast(sn, b)$ method to propose a value (block in the context of building a blockchain) $b$ for slot $log[sn]$. 

We assume that each slot in the log can be in one of three states: \unwritten, \finalized, or \committed. 
If a slot is finalized, then its contents will not be altered in the future, and content in the slot, if any, is reliably replicated and permanently stored. 
Slot commitment is defined inductively: if a prefix is \finalized, then each slot in the prefix is considered \committed.
A slot that is neither finalized nor committed is \unwritten.
We assume that the log interface further exposes a notification event LOG-COMMIT$(sn, b)$ which is triggered at all nodes once a slot $log[sn]$ gets committed with value $b$.

A replicated log must maintain the following guarantees:

\begin{itemize}
    \item \textbf{Consistency} If two correct nodes attain LOG-COMMIT$(sn, b)$ and LOG-COMMIT$(sn, b')$ for the same slot number $sn$, then $b = b'$.
    \item \textbf{Liveness} %
    Eventually every slot $log[sn]$ attains %
    LOG-COMMIT$(sn, b)$ at all correct nodes, where  $b$ is a block or a special $\bot$ value. %
\end{itemize}

\subsection{Out-of-Order Finality} \label{sec:eject}

To benefit from the ability to inject slot proposals in parallel, slots should be able to finalize \textit{out-of-order}.
That is, each proposal explicitly carries with it a ticket for a slot; the log replication protocol then tries to finalize each slot with a proposal ticketed for it. 
Out-of-order finality allows ``holes'' to be left in the log by bad ticket-holders. Holes prevent the commit progress, but not the finality progress, of subsequent slots. That is, higher slots may become finalized without indirectly finalizing all lower slots, unlike many log-replication protocols (e.g., Raft~\cite{raft}, HotStuff~\cite{hotstuff}) that finalize slots in monotonically increasing order.
In order to prevent holes from preventing higher slots from committing, each slot may be finalized by consensus with a special $\bot$ value after an expiration period.

Whereas any consensus algorithm could be used as a per-slot protocol,
our evaluation focuses on BBCA-Ledger~\cite{stathakopoulou2023bbca}. 
For completeness, briefly BBCA-Ledger implements a single-view regime of PBFT per slot, driven by a leader designated as a ``ticket holder'' for that slot. If there is no observed decision for a certain period, nodes ``eject'' and trigger a fallback consensus, which is akin to a view-change mechanism but can determine only one of two possible outcomes: either the ticket-holder's original proposal or $\bot$.

To guarantee that ejecting does not disrupt liveness, it needs to be synchronized across nodes. 
We assume that there exists a view-synchronization module called $\pacemaker$~\cite{bravo2022making,naor2021cogsworth,cohen2022aware}. $\pacemaker$ manages the starting time and the length of the slot timer for each slot and guarantees a minimum time frame for making progress after GST.  More specifically, nodes can access a local pacemaker module $\Gamma[sn]$ maintaining the following guarantee:

\begin{definition}[Synchronized slot]
    With $\pacemaker$, for all slots $log[sn]$ starting after $GST$, $\Gamma[sn]$ of all correct nodes are active for at least $\Delta_p$ time. We say a slot $log[sn]$ starts after $GST$ if the earliest slot timer for $sn$ of a correct node starts after $ GST$.
    \label{def:sync-slot}
\end{definition}

Here the parameter $\Delta_p$ represents the overlap duration in slots sufficient for correct participants to finalize a block after GST. It is determined by the replicated log protocol and the fallback consensus module. 
    
\subsection{Ticketing: Problem Statement} \label{sec:ticketing}
The goal of this work is to develop an efficient \tckting mechanism for \emph{orchestrating} broadcasts; i.e., assigning to nodes the right to propose to slots in the log, referred to as tickets. %
The \tckting module specifies a local interface $\vt(log, sn, p)\in\{\valid,\invalid,\undefined\}$, which allows nodes to verify locally whether node $p$ is eligible for proposing in slot $log[sn]$, given the current view of $log$. If the return value is $\invalid$, the message will be ignored; when $\undefined$ is returned, the message will be buffered and checked again when log gets updated; otherwise, nodes will further process the proposal in the replicated log protocol. One key property of the interface is that if $\vt(log, sn, p) = \valid$ at some correct node, then $\vt(log, sn, p) \neq \invalid$ at any correct node. This ensures consistency in the eligibility of proposals across correct nodes.

Designing a ticketing scheme that enables good performance under varying conditions and workloads surfaces several desiderata. For instance, the ticketing scheme that assigns slots to all nodes uniformly should perform well in a symmetric network.~\footnote{Our subsequent experiments reveal that, even in a statistically symmetrical network, there are nuanced discrepancies in each node's progress due to system bootstrapping, necessitating a more adaptive design.}  However, in situations where many nodes lack data to propose, such an even allocation could unintentionally waste bandwidth by finalizing empty slots. A more refined strategy would wisely allocate more slots, for example, to nodes with a larger pool of payloads. Beyond just the volume of data to propose, nodes equipped with other resources such as better network capabilities and more advanced computational power should be given a greater number of slots, proportional to their contribution to the log.  %
This property is captured by the principle known as {\it meritocracy}, emphasizing the importance of efficiency and smart resource utilization. 

Furthermore, a good ticketing scheme should be adaptable, ready for both ideal and worst-case scenarios. In an ideal setting where the network is well-connected and all nodes are fault-free, it is best to assign each slot to a single node, eliminating any potential conflicts (referred to as \textit{contention-free allocation}). %
In the face of unstable network conditions or instances of node crashes, however, the design should be resilient enough to minimize the waste of resources caused by failures. To help measure this, we introduce a property called {\it slot utilization}, inspired by the leader utilization proposed by Carousel~\cite{cohen2022aware}, which aims to restrict the number of skipped slots after GST in a crash-only execution. %

Additionally, we take into account the {\it chain quality}~\cite{garay2015bitcoin} of the entire log. This aspect ensures that the portion of log slots proposed by Byzantine nodes after GST remains bounded. The ticketing process specifies a sliding window of pending proposals, capturing the system's inherent parallelism. The design of the window size should aim to enable seamless system operation while preventing potential bottlenecks. These essential properties for a desired \tckting regime are summarized as follows:

\begin{itemize}
    \item {\bf Meritocracy} Nodes that are active %
    and equipped with better resources (e.g., larger payload, advanced network, and computational resources) are favored with a higher number of slots.
    \item {\bf Contention-free allocation} A unique node is assigned to every individual slot.
    \item {\bf Slot utilization} In crash-only executions, after GST, the number of skipped slots is bounded.
    \item {\bf Chain quality} After GST, the proportion of blocks contributed by Byzantine nodes is bounded in the committed chain of correct nodes. %
\end{itemize}

\begin{figure*}
    \centering
    \includegraphics[width=\textwidth]{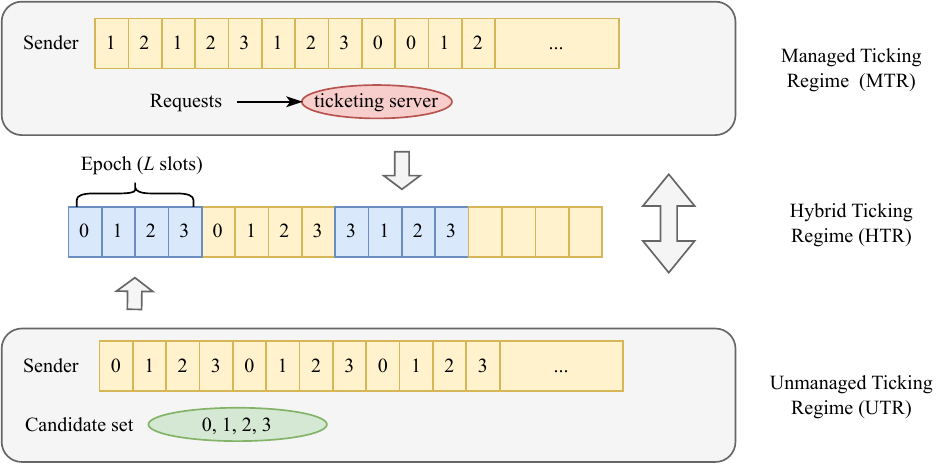}
    \caption{A hybrid managed/unmanaged ticketing regime.} 
    \label{fig:hybrid}
\end{figure*}

\subsection{Technical Approach}
Exploring the dynamics of ticketing in distributed systems, we distinguish between two main patterns: managed and unmanaged. Managed \tckting utilizes a centralized coordinator who listens to ticket requests from potential proposers and assigns slots. Unmanaged \tckting operates in a decentralized manner.

Our analysis compares three specific ticketing regimes, two basic regimes implementing solely managed or unmanaged types, and a dual-mode ticketing regime incorporating both types. On the one hand, the evaluations reveal that the \actrFullName adapts well to dynamic network conditions but is vulnerable to single-node failures. On the other hand, the \passtrFullName offers simplicity and stability in face of faulty environments, yet lacks responsiveness to network changes. Recognizing the limitations inherent in both models, we identify the need for a hybrid paradigm, which integrates the adaptability of the managed approach with the stability of the unmanaged approach, addressing the complexities in system environments and participant behaviors.

The proposed algorithm, called \ourtrFullName, defines a switching mechanism between a \actrFullName (\actrShortName) and a \passtrFullName (\passtrShortName) 
, based on assessments of network stability and performance metrics (Figure~\ref{fig:hybrid}). Specifically, when the network is good and the log is growing without skipped slots, the managed scheme will be adopted to optimize resource allocation. When the network is unstable or a Byzantine \ticketmaster is in place, the log might be stalled, or chain quality is harmed. In this case, the unmanaged scheme will be adopted to resynchronize nodes and bring the system back to a normal pace. This adaptive mechanism ensures efficient resource utilization and good system performance across a range of conditions.

\section{The \ourtrFullNameCap}
\label{sec:protocol}
\subsection{The Protocol}
When designing the \tckting scheme, there are two possible basic paradigms, \textit{managed} and \textit{unmanaged}. 
In a managed \tckting approach, 
a special role is given to one node at a time to actively manage the assignment of the slots.
In an unmanaged \tckting approach, there is a deterministic rule allowing nodes to independently find slot assignments based on their local copy of the log. 
As outlined in \Cref{sec:overview}, we combine both approaches into a hybrid scheme, thus enjoying the agility of a managed approach, coupled with dynamic switching to an unmanaged one for Byzantine resilience.

In Algorithm~\ref{alg:hybrid} we introduce the \ourtrFullName (\ourtrShortName). The protocol groups every $L \geq 2f+1$ slots into an {\it epoch}, such that all slots in each epoch are assigned by the same \tckting scheme.
For each epoch $i$, each node maintains a local candidate set $C[i]$, and a local ticketing scheme $TR[i]$. 
When $TR[i] = -1$, the epoch employs an unmanaged, round-robin scheme, rotating through the nodes in $C[i]$ as eligible proposers. In this case, the $\vt$ function simply verifies whether a proposer for slot $sn$ in epoch $i$ has index $sn \mod L$ in $C[i]$.
When $0\le TR[i] \le n-1$, $TR[i]$ represents the elected \ticketmaster. 
The \ticketmaster accepts requests from nodes and sends signed certificates allocating slots to nodes, which can be verified by the $\vt$ function. 
For completeness, we define that the $\vt(log, sn, *)$ function will return $\undefined$ for all checks to slot $sn$ in epoch $i$ when $TR[i]$ has not been decided. However, correct nodes will never check an unentered epoch and thus will never receive $\undefined$ as the return value. 

To enable high parallelism for data dissemination, our protocol allows $K$ epochs to proceed concurrently. Initially, the first $K$ epochs start simultaneously using an unmanaged round-robin scheme rotating through all nodes (line~\ref{alg:init-begin}-\ref{alg:init-end}). For each subsequent epoch $i>K$, the scheme for the epoch is determined by the outcome of all the slots of epoch $i-K$.  As soon as $TR[i]$ is known, designated proposers can start initiating broadcasts for the epoch. As a result, our protocol allows a node to propose slots at least $(K-1)L$ ahead of the highest committed slot it knows, this bound is denoted as  \textit{\mswlong(\mswshort)}.

Each node subscribes to the LOG-COMMIT event from the replicated log protocol. Upon the commitment of all slots in epoch $i$ (interchangeably, we say epoch $i$ gets committed) (line~\ref{alg:commit-epoch}), %
the protocol reverts the \tckting regime to \passtrShortName (or keeps it, if already using it) if either one of the following two conditions holds: (1) there exists at least one skipped slot in epoch $i$ (i.e., slots committed with a special $\bot$ value), (2) the the number of distinct `active' senders in epoch $i$ is less than $2f+1$. An active sender is a node with at least one proposed slot committed in epoch $i$. If  either (1) or (2) hold, epoch $i+K$ is set to use an unmanaged round-robin scheme (i.e., $TR[i+K] = -1$). Otherwise, a \ticketmaster is selected among the candidate set $C[i+K]$ as explained below (line~\ref{alg:switch-begin}-\ref{alg:switch-end}) through function \textsf{getTicketingServer}.

The candidate set $C[i+K]$ is updated after every unmanaged epoch $i$ (line~\ref{alg:update-cond}). 
Usually, the candidate set for epoch $i+K$ is updated to the set of `active' senders from epoch $i$, containing all senders with at least one proposed slot committed in epoch $i$. %
However, to ensure chain quality, it is imperative to maintain a minimum of $2f+1$ senders. In situations where there are not enough active senders, $C[i+K]$ is reset to the group of all nodes. If the committed epoch $i$ operates with \actrShortName, the protocol keeps the same candidate set to counter potential manipulations by a Byzantine \ticketmaster, who could possibly sideline correct senders to gain unfair election privilege. Figure~\ref{fig:example} illustrates an example of how \ourtrShortName updates different parameters based on execution results.

\begin{algorithm*}
\caption{The \ourtrFullName implementation }
\label{alg:hybrid}
\begin{algorithmic}[1]
    \Init
        \State $n$ \Comment{Number of nodes}
        \State $L$ \Comment{Number of slots per epoch}        
        \State $K$ \Comment{Number of concurrent epochs}
        \State $seed$ \Comment{Common seed}
        \State $C\gets \{\}$ \Comment{Map of candidate set per epoch}  
        \State $TR\gets \{\}$ \Comment{Map of ticketing regime per epoch}
    \EndInit

    \Upon{$INIT()$} \label{alg:init-begin}
        \State $C[1] = C[2] =\cdots = C[K]= [0, 1, \cdots, n-1]$
        \State $TR[1] = TR[2] =\cdots = TR[K]= -1$ \Comment{First $K$ epochs are round-robin epochs}
    \EndUpon\label{alg:init-end}

    \Upon{LOG-COMMIT$(sn, b)$}
        \If{$sn\mod L = 0$} \Comment{All slots in an epoch are committed}\label{alg:commit-epoch}
            \State $i\gets sn / L$ 
            \LeftComment{Update candidate set}
            \If{$TR[i] = -1$} \label{alg:update-cond}
                \State $S\gets \{log[sn].sender\, |\, \forall sn \in [(i-1)L+1, iL], log[sn] \neq \bot\}$  %
                \If{$|S| <2f+1$}
                    \State $C[i+K] = [0, 1, \cdots, n-1]$
                \Else
                    \State $C[i+K] = S$
                \EndIf
            \Else
                \State $C[i+K] = C[i]$
            \EndIf
            \LeftComment{Switch ticketing regime}
            \If{$\exists sn \in [(i-1)L+1, iL], log[sn] = \bot$ or $|S| < 2f+1 \land TR[i]\neq -1 $}\label{alg:switch-begin}
                \State $TR[i+K]\gets -1$ 
            \Else 
                \State $TR[i+K]\gets {\sf getTicketingServer}(i+K)$
            \EndIf\label{alg:switch-end}

        \EndIf   
    \EndUpon

\Function{{\sf getTicketingServer}}{$epoch$}
\label{func:gettm}
    \State $k\gets Hash(seed, epoch) \mod |C[epoch]|$
    \State sort $C[epoch]$ by nodes' public keys in ascendant order
    \State return $C[epoch][k]$
\EndFunction
\end{algorithmic}
\end{algorithm*}

\begin{figure*}
    \centering
    \includegraphics[width=\textwidth]{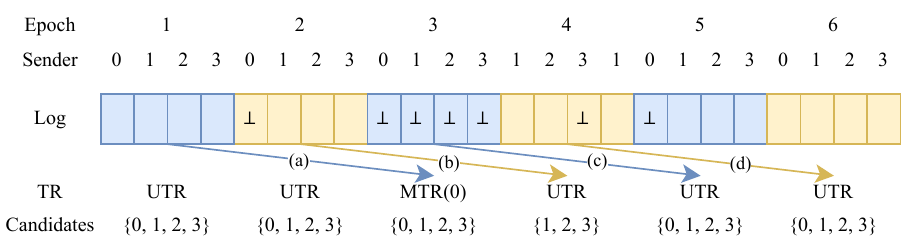}
    \caption{An example with six epochs, $n=4, L=4, K=2$. Log slots with $\bot$ are skipped slots and others are committed with non empty values. The example shows four possible updating rules: (a) epoch 3 uses a \ticketmaster since no slots are skipped in epoch 1; (b) epoch 4 keeps using round-robin since one slot is skipped in epoch 2 and the candidate set is updated to exclude node 0; (c) though all slots in epoch 3 are skipped, the candidate set remains the same since epoch 3 uses the managed ticketing regime; (d) the candidate set is reset to the full group of nodes in epoch 5 since epoch 4 adopts the unmanaged ticketing regime and has only 2 active senders.} 
    \label{fig:example}
\end{figure*}

\subsection{Analysis}

In Section~\ref{sec:ticketing}, we identified the desired properties of an efficient \tckting regime. In this section, we prove that our design satisfies these properties, except for meritocracy, which is demonstrated in Section~\ref{sec:eval}.

We first state the most straightforward property. In epochs with the \passtrFullName and correct \ticketmasters, only one sender will be assigned to each slot, therefore our design is a \textit{contention-free allocation} in these good cases. %

\paragraph*{Slot utilization.} To prove slot utilization, we first prove that the \tckting regime in each epoch is consistent across correct nodes. 

\begin{lemma}[Epoch consistency]  For any epoch $i$ and any two correct nodes $p$ and $q$, let $C_p, C_q$ and $TR_p, TR_q$ denote their local candidate sets and tickecing regimes, then $C_p[i] = C_q[i]$ and $TR_p[i] = TR_q[i]$.
\label{lemma:epoch_consistency}
\end{lemma}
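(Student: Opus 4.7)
The plan is to proceed by strong induction on the epoch index $i$. For the base case $1 \le i \le K$, the \textsf{INIT} handler in Algorithm~\ref{alg:hybrid} (lines~\ref{alg:init-begin}--\ref{alg:init-end}) deterministically assigns $C[i] = [0,1,\ldots,n-1]$ and $TR[i] = -1$ at every node, so the claim is immediate for both entries.

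For the inductive step with $i > K$, I would assume the statement for every $j < i$ and show that $C[i]$ and $TR[i]$ are computed, inside the LOG-COMMIT handler for $sn = (i-K)L$, as a pure function of quantities that are themselves already consistent across correct nodes. These inputs are: the previous scheme $TR[i-K]$ and candidate set $C[i-K]$, which are consistent by the inductive hypothesis; and the contents and senders of the committed slots $log[(i-K-1)L+1], \ldots, log[(i-K)L]$. The Consistency property of the replicated log from~\Cref{sec:model} forces every correct node that fires LOG-COMMIT$(sn,b)$ to observe the same $b$ (including its sender field), so the derived set $S$ of active senders in epoch $i-K$, the predicate ``some slot equals $\bot$'', and the test $|S| < 2f+1$ all evaluate identically at $p$ and $q$. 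This yields $C_p[i] = C_q[i]$ directly, and reduces $TR_p[i] = TR_q[i]$ to the deterministic choice made inside \textsf{getTicketingServer}$(i)$. That function depends only on the global constant $seed$, the epoch index $i$, and the already-consistent set $C[i]$ sorted by public keys, so it returns the same \ticketmaster at $p$ and $q$.

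The main obstacle I expect is temporal rather than combinatorial: different correct nodes may fire the relevant LOG-COMMIT event at different wall-clock times, so at any fixed instant the local maps $C$ and $TR$ can disagree about whether an entry is defined at all. I would handle this by reading the lemma as an equality of the values eventually assigned to the entries $C[i]$ and $TR[i]$, and invoke the Liveness property of the log to guarantee that epoch $i-K$ is eventually committed at every correct node, at which point the deterministic handler above runs exactly once and fixes the entries for good. The $\vt$ convention that returns $\undefined$ on unentered epochs makes this precise: before the handler fires, the entry is simply undefined and induces no contradiction; once it fires, the inductive argument pins down its value. With this reading the induction closes, establishing epoch consistency, which is the key ingredient later used in the slot-utilization bound.
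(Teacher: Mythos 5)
Your proof is correct and takes essentially the same approach as the paper's: an induction over epochs (the paper organizes it as $K$ parallel inductions over the subsequences $j, j+K, j+2K, \ldots$, which is structurally identical to your strong induction keyed on $i-K$), resting on the same two pillars --- consistency of the replicated log forcing identical views of skipped slots and active senders, and determinism of ${\sf getTicketingServer}$ given the common $seed$ and the already-consistent candidate set. Your explicit treatment of the temporal issue (reading the lemma as equality of the eventually-assigned values, justified by log liveness) is a clarification the paper leaves implicit rather than a genuinely different argument.
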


\begin{proof}
    We first demonstrate that for a series of epochs executed sequentially, such as epochs $1, K+1, 2K+1, \cdots$, if correct nodes begin with consistent views of their candidate sets and \tckting schemes, they will maintain these views consistently in all subsequent epochs.
    
    Base case: initially, any two correct nodes $p$ and $q$ have $C_p[1] = C_q[1]$ and start with $TR_p[1] = TR_q[1] = -1$. 

    Inductive Step: for any $i = tK+1 (t\ge 1)$, we will show that $C_p[i] = C_q[i]$ and $TR_p[i] = TR_q[i]$ holds if $TR_p[i-K] = TR_q[i-K]$ and $C_p[i-K] = C_q[i-K]$ holds. 
    
    First we prove the candidate sets are consistent. If $TR_p[i-K] = TR_q[i-K] = -1$, by consistency of the distributed log, all correct nodes recognize the identical set of active senders $S$. As a result, regardless of the number of senders in this set, $C_p[i] = C_q[i]$. If $TR_p[i-K] = TR_q[i-K] \ge 0$, following the given protocol, we have $C_p[i] = C_p[i-K] = C_q[i-K] = C_q[i]$.

    Then we prove the ticketing schemes are consistent. Again by consistency of the distributed log, $TR_p[i] = TR_q[i] = -1$ if there are any skipped slots. Otherwise, correct nodes unanimously elect a  \ticketmaster by invoking the function ${\sf getTicketingServer(i)}$. Given the same seed and the previously demonstrated consistency $C_p[i] = C_q[i]$,  it follows that $TR_p[i] = TR_q[i]$.

By induction, we conclude that the assertion holds for all  $iK +1$.

An identical inductive proof holds for $iK + j$ for every $j = 1,2,\cdots,K$ hence the Lemma holds for all epochs $> 0$.
\end{proof}

To uphold slot utilization, after GST, for any given slot, we must ensure that if there exists a single correct sender, the slot will not be skipped.

\begin{lemma}[Non-skipping epoch] In a crash-only execution, if no nodes have crashed in an epoch $i$ starting after GST, nor are any nodes in $C[i]$ currently crashed, then no slots will be skipped in epoch $i$.
\label{lemma:non-skipping}
\end{lemma}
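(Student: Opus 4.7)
The plan is to invoke \Cref{lemma:epoch_consistency} to fix a common view of $C[i]$ and $TR[i]$ across all correct nodes, and then split on the value of $TR[i]$. In both cases the argument combines three ingredients: (i) the hypothesis that every node in $C[i]$ remains up throughout epoch $i$; (ii) partial synchrony after GST, bounding message delays by $\Delta$; and (iii) the synchronized-slot property of \Cref{def:sync-slot}, which guarantees that on every slot $sn$ in epoch $i$ the local timers $\Gamma[sn]$ at all correct nodes are simultaneously active for at least $\Delta_p$. Because $\Delta_p$ is, by definition, sufficient for the underlying replicated log to finalize a proposal, it suffices in each case to exhibit a correct, alive proposer whose proposal reaches all correct nodes within this window.

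For the unmanaged case $TR[i] = -1$, each slot $sn$ in epoch $i$ has a unique designated proposer $p = C[i][sn \bmod L]$. By hypothesis $p$ is correct and alive; moreover the \mswshort bound lets $p$ issue $\broadcast(sn, b)$ with a non-$\bot$ block $b$ as soon as $sn$ becomes eligible. By partial synchrony, the proposal arrives at every correct node within $\Delta$, is accepted because $\vt(log, sn, p) = \valid$ by construction, and is finalized by the replicated log within the $\Delta_p$ overlap, so $log[sn] \neq \bot$.

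For the managed case $0 \le TR[i] \le n-1$, the elected ticketmaster $T = TR[i]$ lies in $C[i]$ by construction and is therefore alive and correct. Alive nodes send ticket requests to $T$, $T$ returns signed certificates allocating slots, and the designated ticket holders broadcast their proposals. The step I expect to be the main obstacle is showing that this multi-hop interaction---request, grant, broadcast, finalize---fits inside the $\Delta_p$ window at each slot, since the lemma needs an end-to-end timing argument rather than just the existence of a live proposer. I would handle this by treating $\Delta_p$ as precisely the overlap parameter stipulated by \Cref{def:sync-slot} for the replicated log protocol in its ticketed form: under that reading the finalization path already subsumes the constant number of $\Delta$-bounded exchanges with $T$, all of which succeed because $T$ and the chosen ticket holders are correct and alive throughout epoch $i$. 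Consequently every slot in epoch $i$ finalizes with a non-$\bot$ value, and no slot is skipped.
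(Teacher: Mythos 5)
Your proposal is correct and follows essentially the same route as the paper's own proof: fix a consistent $C[i]$ and $TR[i]$ via \Cref{lemma:epoch_consistency}, invoke the synchronized-slot guarantee of \Cref{def:sync-slot}, and case-split on unmanaged versus managed regimes, concluding in each case that every slot has a live, correct proposer and hence finalizes with a non-$\bot$ value. The extra detail you supply (the $\vt$ check, and folding the request--grant--broadcast exchanges with the ticketmaster into the $\Delta_p$ overlap) is a faithful elaboration of what the paper leaves implicit, not a different argument.
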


\begin{proof}
    By Definition \ref{def:sync-slot}, slots in epoch $i$ are synchronized slots, so correct nodes have enough time to participate.  By Lemma \ref{lemma:epoch_consistency}, all correct nodes have the same candidate set. If epoch $i$ is a round-robin epoch, then since all candidates are alive and no nodes have crashed in this epoch, all slots will be finalized. If epoch $i$ uses a \ticketmaster selected from the candidate set, it is alive, and since there is no crash in the current epoch, every slot will be finalized.
\end{proof}

\begin{theorem}[Slot utilization] 
\label{th:utilization}
In a crash-only execution after GST, the number of slots $s$ committed with $\bot$ (called skipped slots) is bounded by $O(fKL)$.
\end{theorem}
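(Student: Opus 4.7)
The plan is to reduce the theorem to counting ``bad'' epochs (those containing at least one skipped slot) after GST, and to show that there are $O(fK)$ of them; since each epoch contributes at most $L$ skipped slots, this yields the $O(fKL)$ bound. The reduction itself is the contrapositive of Lemma~\ref{lemma:non-skipping}: any bad epoch $i$ starting after GST must either have a crash occur during its duration, or begin with some node in $C[i]$ already crashed. Lemma~\ref{lemma:epoch_consistency} is what makes it meaningful to speak of $C[i]$ and $TR[i]$ as global objects rather than per-node quantities, so the condition is well-defined. The task thereby becomes to bound, for each crashed node $p$, the number of post-GST epochs in which $p$ still sits in $C[i]$ after it has crashed.

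The core step exploits the $K$-epoch pipeline structure of Algorithm~\ref{alg:hybrid}. Because $C[i+K]$ and $TR[i+K]$ are derived only from the outcome of epoch $i$, the epochs partition into $K$ independent ``threads'' indexed by residue class modulo $K$. Within each thread, I would argue that $p$ is evicted from the candidate set within a constant number of successive thread-epochs after its crash. The argument is by case analysis on the regime of the first post-crash thread-epoch that still lists $p$ in $C$: if it is unmanaged, $p$'s assigned slots are skipped but $p$ is absent from the active-sender set $S$, so $p$ is dropped from $C$ of the next thread-epoch; if it is managed with $p$ as the ticketmaster, up to $L$ slots may skip, but then the following thread-epoch reverts to unmanaged by the skipped-slot trigger and evicts $p$ the iteration after; if it is managed with a different ticketmaster, no slots are skipped on account of $p$ at all. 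In every case each crashed node contributes $O(1)$ bad thread-epochs per thread, hence $O(K)$ across all $K$ threads. Summing over the at most $f$ crashes gives $O(fK)$ bad epochs and therefore $O(fKL)$ skipped slots.

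The main obstacle I anticipate is the reset branch in the candidate-set update, where $C$ is reset to the full set $\{0,\dots,n-1\}$ whenever $|S|<2f+1$ in an unmanaged epoch. Naively this re-admits crashed nodes and could prolong their influence. I would control this by observing that, under the crash-only post-GST hypothesis, $n-f\ge 2f+1$ together with $L\ge 2f+1$ implies that once the candidate set in a thread has stabilized to a subset of the correct nodes (with each candidate actually receiving at least one round-robin slot), the active-sender count stays above the reset threshold in every subsequent unmanaged thread-epoch of that thread. Thus the reset can fire only during a bounded transient following each crash, and those transient resets can be folded into the $O(1)$-per-thread eviction budget above without changing the asymptotics. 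A careful accounting of this transient, especially in the case $|C[i]|>L$ where round-robin fails to cover every candidate in a single epoch, will be the most delicate bookkeeping in the formal write-up.
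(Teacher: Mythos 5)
Your proposal is correct and takes essentially the same route as the paper's own proof: both decompose the epochs into $K$ independent threads (the paper's ``$k$-subsequences''), run the same three-way case analysis (an unmanaged epoch evicts a crashed candidate from $C$; a crashed \ticketmaster costs one all-skipped epoch and then forces a reversion to the unmanaged regime; a live \ticketmaster costs nothing), and charge candidate-set resets to individual crashes so that crashed nodes can re-enter candidate sets only $O(f)$ times per thread. The differences are purely in accounting granularity---you bound bad epochs at $L$ skipped slots each where the paper counts more finely ($fL + 2(f+1)(L/n)$ per thread)---and the $|C[i]|>L$ round-robin-coverage subtlety you flag as delicate is real, but it is a point the paper's own proof silently assumes away as well.
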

\begin{proof}
To simplify notation, we will fix $1 \le k \le K$ and bound the number of skipped slots after GST in the ``$k$-\textit{subsequence}'' of epochs $i$ of the form $k + x\cdot K$, for $x=1, 2, \cdots$. Each such sub-sequence operates independently from others with respect to determining the \tckting regime. We can then multiply the bound we obtain in the end by $K$.

We argue that after GST, within each $k$-subsequence there are most $f+1$ candidate-set resets to a full set. 
If a candidate-set reset happens at, say, epoch $i$, then for all epochs $j = i+xK$, $x=1, 2, ...$ , there are two cases to consider.
Case 1 is that epoch $j$ is unmanaged or it is managed by a nonfaulty \ticketmaster. 
Then there are at least $2f+1$ non-crashed active senders and the candidate-set will not be reset. 
Case 2 is that epoch $j$ is managed by a crashed \ticketmaster $p$. 
Then $j+K$ will be unmanaged. From here on, any higher epoch $j+K+xK$ can switch to a managed regime only if every node in $C[j+K+xK]$ is an active sender and there are no crashed nodes. Hence, each node who has crashed by epoch $i$ cannot become a \ticketmaster (and a fortiori, $p$ cannot become a \ticketmaster again).  Hence, this case can contribute at most $f$ resets.

We now put together the number of skipped slots both cases may contribute. The second case, managed epochs whose \ticketmasters are crashed, may contribute $fL$ skipped slots. 
Additionally, in both bases put together, each crashed node $p$ can contribute at most $2(f+1)(L/n)$ skipped slots,
because after an epoch with (at most $L/n$) skipped slots by $p$, it is removed from the candidate-set. As we showed above, nodes can return to the candidate-set at most $(f+1)$ times. 

The total number of skipped slots per $k$-subsequence 
is therefore bounded by $fL + 2(f+1)(L/n))$, and the total skipped slots by $K \times (fL + 2(f+1)(L/n))$. 
\end{proof}

\paragraph*{Chain quality.} 
Last, to prove \textit{chain quality} after GST, we consider the worst case when Byzantine \ticketmasters are consecutive. We have the following theorem:

\begin{theorem} [Chain Quality]
\label{lemma:correct-block}
Algorithm~\ref{alg:hybrid} satisfies chain quality: at least $(f+1)K$ blocks are proposed by correct nodes in every $2KL$ blocks after GST.
\end{theorem}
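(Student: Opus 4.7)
The plan is to fix an arbitrary window of $2K$ consecutive epochs after GST --- containing exactly $2KL$ slots --- and exhibit at least $(f+1)K$ correctly-proposed blocks in it. Because $TR[i]$ and $C[i]$ depend only on the commit outcome of epoch $i-K$, the $K$ sub-sequences $\{k, k+K, k+2K, \ldots\}$ for $k = 1, \ldots, K$ evolve independently, and the window contains exactly two consecutive epochs from each, which I will denote $i$ and $i+K$. My first step is to establish an invariant by induction on epochs: for every epoch $j$ the candidate set satisfies $|C[j]| \ge 2f+1$, and therefore contains at least $f+1$ correct nodes. The base case $C[1]=\cdots=C[K]$ is the full group of $n \ge 3f+1$ nodes, and inductively each $C[j]$ is either reset to the full $n$ nodes, taken as the active senders of a UTR epoch that passed the $2f+1$ threshold, or inherited unchanged from a previous managed epoch already satisfying the invariant.

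The heart of the argument is to show that each k-sub-sequence $(i, i+K)$ contributes at least $f+1$ correctly-proposed blocks, by case analysis on $TR[i]$ and the switch predicate at lines~\ref{alg:switch-begin}--\ref{alg:switch-end}. Case~(i), $TR[i] = -1$: the UTR round-robin through $C[i]$ assigns each candidate at most $\lceil L/|C[i]|\rceil$ slots, so a short counting argument shows that at least $f+1$ of the $L \ge 2f+1$ slots are ticketed to correct candidates; by the synchronized-slot guarantee of Definition~\ref{def:sync-slot} those proposals become finalized and committed. Case~(ii), $TR[i] \ge 0$ and the switch predicate does not fire: then epoch $i$ has no skipped slots and at least $2f+1$ distinct active senders, of whom at most $f$ are Byzantine, so at least $f+1$ are correct and each contributes at least one correctly-proposed block. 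Case~(iii), $TR[i] \ge 0$ and the switch predicate fires: then $TR[i+K] = -1$ while $C[i+K] = C[i]$ (by the managed-epoch branch of the candidate update) still satisfies the invariant, so epoch $i+K$ reduces to Case~(i) and contributes $\ge f+1$ correctly-proposed blocks.

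Summing across the $K$ independent sub-sequences yields $\ge (f+1)K$ correctly-proposed blocks in the $2KL$-slot window, which is the claimed chain quality. The main obstacle I anticipate is the counting in Case~(i): the algorithm indexes proposers by $sn \bmod L$ inside $C[i]$, so one must separately check the regimes $L \ge |C[i]|$ (where each candidate receives $\lfloor L/|C[i]|\rfloor \ge 1$ slot, giving $\ge |C[i]|-f \ge f+1$ correct slots) and $L < |C[i]|$ (where only the first $L$ entries of $C[i]$ get tickets, but they contain at most $f$ Byzantine and hence at least $L-f \ge f+1$ correct ones), and also rule out that Byzantine candidates spoof correct proposals --- which follows from $\vt$ designating a unique eligible proposer per slot. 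A secondary subtlety, resolved by Lemma~\ref{lemma:epoch_consistency}, is that the case analysis is globally well-defined: all correct nodes agree on $TR[i]$ and $C[i]$, so every correct node observes the same scenario $(i, i+K)$ in each sub-sequence.
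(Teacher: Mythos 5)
Your proof is correct and takes essentially the same approach as the paper's: good epochs (unmanaged ones, or managed ones where the switch predicate does not fire) each contain at least $2f+1$ active/eligible senders and hence at least $f+1$ correct blocks, while a misbehaving managed epoch $i$ is detected upon commit and forces epoch $i+K$ to the unmanaged regime, so each of the $K$ independent sub-sequences yields at least $f+1$ correct blocks per pair of consecutive epochs, giving $(f+1)K$ per $2KL$ slots. Your write-up is actually tighter in two places the paper leaves implicit---the invariant $|C[j]| \ge 2f+1$, and casing on the observable switch predicate rather than on whether the ticketing-server is Byzantine (a Byzantine server that behaves well falls outside the paper's dichotomy but is handled by your Case~(ii))---but the underlying argument is the same.
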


\begin{proof}
Since in epochs with the \passtrFullName or with correct \ticketmasters there are at least $2f+1$ senders, by Definition \ref{def:sync-slot} and Lemma \ref{lemma:epoch_consistency}, these senders will successfully commit their slots. Hence, the number of blocks contributed by correct nodes in each such epoch is at least $f+1$. 

In epochs with Byzantine \ticketmasters, it's possible that all blocks are proposed by Byzantine nodes, but this can be detected by counting the number of distinct active senders after the epochs are committed. Once such an epoch $i$ is detected, the upcoming epoch $i+K$ will use \passtrShortName, and it will have at least $f+1$ blocks contributed by correct nodes. %
As a result, in every $2K$ epochs, at most $K$ epochs have Byzantine \ticketmasters, which means at least $(f+1)K$ blocks are contributed by correct nodes in every $2KL$ blocks.  %
\end{proof}

\paragraph*{Remark}
An analysis of the two basic ticketing regimes (managed and unmanaged) is summarized in \Cref{tab:regime-comparison}. The managed ticketing regime naturally supports meritocracy but has an unbounded number of skipped slots in face of a malicious \ticketmaster. As such, it does not ensure slot utilization or maintain chain quality. The unmanaged scheme distributes tickets evenly across all nodes, which does not support meritocracy but is robust against Byzantine behavior, thereby providing good chain quality. Our approach integrates the advantages of both regimes and improves slot utilization by adaptively updating active sender set. To address the chain quality issues associated with managed epochs, we transition them to unmanaged epochs, thereby maintaining overall high chain quality. Additionally, by introducing parallel epochs, our scheme prevents the epochs with undecided ticketing scheme from hindering the progress of consensus.

\begin{table}[]
\centering
\caption{Comparison of different ticketing regimes.}
\label{tab:regime-comparison}
\begin{tabular}{@{}llccc@{}}
\toprule
                            & \multicolumn{1}{c}{Properties}       & Managed      & Unmanaged  & Dual-regime   \\ \midrule
\multirow{3}{*}{Good-case}  & Meritocracy      & $\checkmark$ &              & $\checkmark$  \\
 & Contention-free & \multirow{2}{*}{$\checkmark$} & \multirow{2}{*}{$\checkmark$} & \multirow{2}{*}{$\checkmark$} \\[-3pt] 
 & allocation \\ \midrule
\multirow{2}{*}{Worst-case} & Slot utilization & unbounded    & unbounded    & $O(fKL)$      \\ 
                            & Chain quality    & unbounded    & $O((f+1)/L)$ & $O((f+1)/2L)$ \\ 
                            \bottomrule
\end{tabular}
\end{table}

\section{Evaluation}
\label{sec:eval}

\subsection{Setup}

This paper focuses on \tckting methods for orchestrating parallel proposing of BFT atomic broadcast.
Our experiments vary the number of simultaneously pending broadcasts and who is permitted to propose.
For a practical system, the total number of outstanding broadcasts (that haven't finished) is bounded due to the limited resources, because until they are finalized, pending broadcasts cannot be compacted or checkpointed to secondary storage.
We denote this tuning knob by \textit{\gswlong} ($\gswshort$). The \mswlong ($\mswshort$) defined in Section~\ref{sec:protocol} captures the maximum $\gswshort$ the experiments can set.
The \gswlong mechanism allows nodes
to participate in broadcasts up to a bounded window of size $\gswshort$ beyond their last locally known committed log-slot. 
This self-throttling sliding window allows all nodes to catch up, limits buffering needs, and prevents faster nodes from proceeding too far ahead.

The assignment of slot numbers to nodes is orthogonal to the $\gswshort$ mechanism. 
Indeed, we explore and evaluate three approaches to manage slot allocation.
In the \passtrFullName (\passtrShortName), slots are allocated to permitted nodes in a round-robin rotation. 
In the \actrFullName  (\actrShortName), slots are allocated via an active \ticketmaster. 
In the \ourtrFullName (\ourtrShortName), slot allocation is operated by a hybrid protocol as specified in Algorithm~\ref{alg:hybrid}. We evaluate latency against throughput of these regimes under varying network and failure conditions. 

Our experiments are conducted with a distributed setup on CloudLab. Since this paper studies \tckting rather than transaction dissemination, we use a small transaction size with $2$ bytes payload throughout our evaluation to show the base performance of the broadcasts, and we do not batch multiple transactions per broadcast. In a production system, however, each proposal could batch hundreds of actual transactions and thus scale up throughput (comparable to published results in the literature). We intentionally isolate this away to emphasize the fundamental performance difference between distinct \tckting regimes.

\subsection{Static Heterogeneity}

\begin{table}[]
\centering
\caption{Performance of different ticketing regimes under a heterogeneous setup}
\label{tbl:slowness}
\begin{tabular}{@{}lSSS[table-format=5.0]rrrr@{}}
\toprule
\multirow{2}{*}{Tickets Assignment} &
  \parbox{5em}{\centering Finality} &
  \parbox{5em}{\centering Commit} &
  \parbox{5em}{\centering Throughput} &
  \multicolumn{4}{c}{Proposed by ... (bps)} \\
  & \parbox{5.25em}{\centering latency (ms)} &
  \parbox{5.25em}{\centering latency (ms)} & 
  \parbox{5em}{\centering (bps)} & \multicolumn{1}{c}{0} & \multicolumn{1}{c}{1} & \multicolumn{1}{c}{2} & \multicolumn{1}{c}{3} \\ \midrule
\passtrShortName (node 0)        & 6.3  & 9.3  & 10770 & 10770 & 0 & 0 & 0         \\
\passtrShortName (node 3)        & 9.0  & 14.2 & 4719  & 0 & 0 & 0 & 4719          \\
\passtrShortName (all nodes)                   & 6.1  & 22.8 & 4406  & 1102 & 1102 & 1102 & 1102 \\
\actrShortName (batch=1)   & 3.9  & 3.9  & 2686  & 853 & 661 & 651 & 356     \\
\actrShortName (batch=10)  & 1.6  & 2.0  & 8830  & 3167 & 2863 & 2611 & 2    \\
\actrShortName (batch=100) & 18.2 & 27.3 & 10755 & 3546 & 3790 & 3383 & 10   \\ \bottomrule
\end{tabular}
\end{table}

We first compare different ticketing regimes over a set of nodes that vary in their processing and network speeds. Specifically, nodes $0-$2 are {\tt c6525-25g} instances ($16$-core $3.0$GHz CPU, $25$GB NIC), while the remaining node $3$ is a slower {\tt m510} instance ($8$-core $2.0$GHz CPU, $10$GB NIC). We compare four ticket assignment strategies: \passtrFullName (\passtrShortName) with only node $0$ (a fast node) included in the candidate set and permitted to propose,  
\passtrShortName with only node $3$ (a slow node) permitted, \passtrShortName with all nodes permitted in a round-robin rotation, and \actrFullName (\actrShortName). 
For the first three strategies, $\mBA$ is set to be $100$, %
which is where the throughput saturates without latency impact. For the last strategy, the $\mBA$ bound is not applied, since parallelism is implicitly managed by the active ticketing server and the way each node asks for tickets (i.e., how many and when). Specifically, the active ticketing server batches tickets and distributes a batch to each proposer upon request, where the batch size is a tunable parameter. 
Each proposer requests for the next batch of tickets only when all slots in the previous batch have been finalized. %

Table~\ref{tbl:slowness} summarizes the results, where the last four columns present the number of slots each node has proposed. When the fastest node is known \textit{a priori}, designating it as the single proposer renders best throughput. 
Having a slower node as the single proposer or allowing all nodes to propose in \passtrShortName regime have similar performance: the round robin assignment is bottlenecked by the slowest node in the system.
Although the slots assigned to fast proposers are finalized rapidly, the slots assigned to the slow proposer progress slowly and result in transient holes everywhere in the ledger given round robin's uniform allocation. This further limits the rate at which fast nodes can propose, due to the parallelism constraints imposed by $\mBA$. 
The \actrShortName regime, on the other hand, demonstrates meritocracy by assigning fast nodes more tickets. With ticket batches of $10$, it strikes the sweet spot between latency and throughput. 
We repeated the same experiment on larger scale networks and observed similar trends. 

The results indicate that when nodes have stable and predictable capabilities, by appointing the most capable proposer, the system reaches the best performance. However, without such prior knowledge, we can still approximate the best case with a \actrFullName since it adapts automatically. This seems to leave the system designer with a choice of either sticking to a single capable proposer or paying a little in performance for adaptivity. Our next set of experiments will explore a dynamic setup of the system where it is infeasible to stick to a fast proposer.

\begin{figure*}
\begin{minipage}{0.49\textwidth}
    \centering
    \includegraphics[width=0.9\columnwidth]{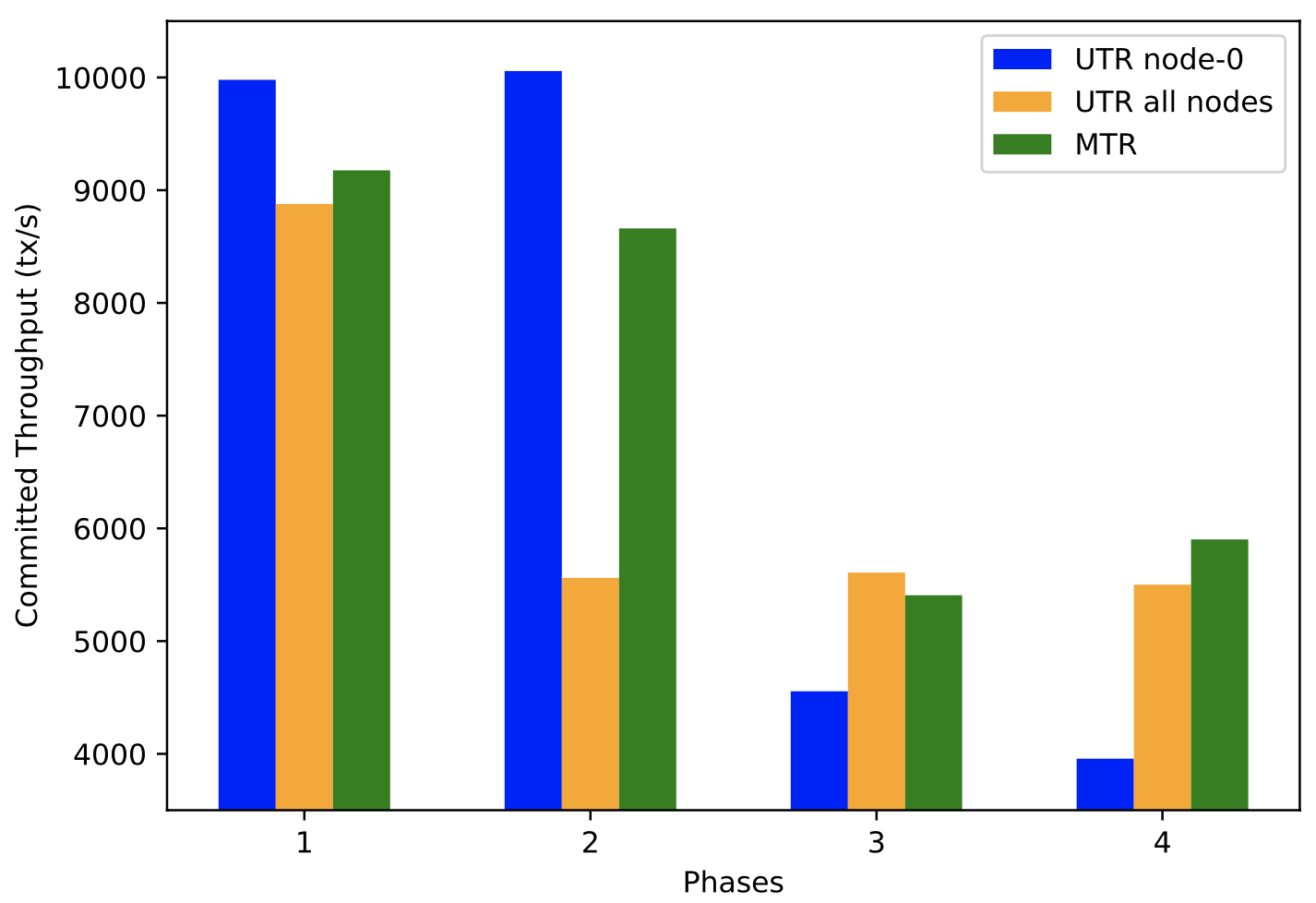}
\end{minipage}
\begin{minipage}{0.49\textwidth}
    \centering
    \includegraphics[width=0.9\columnwidth]{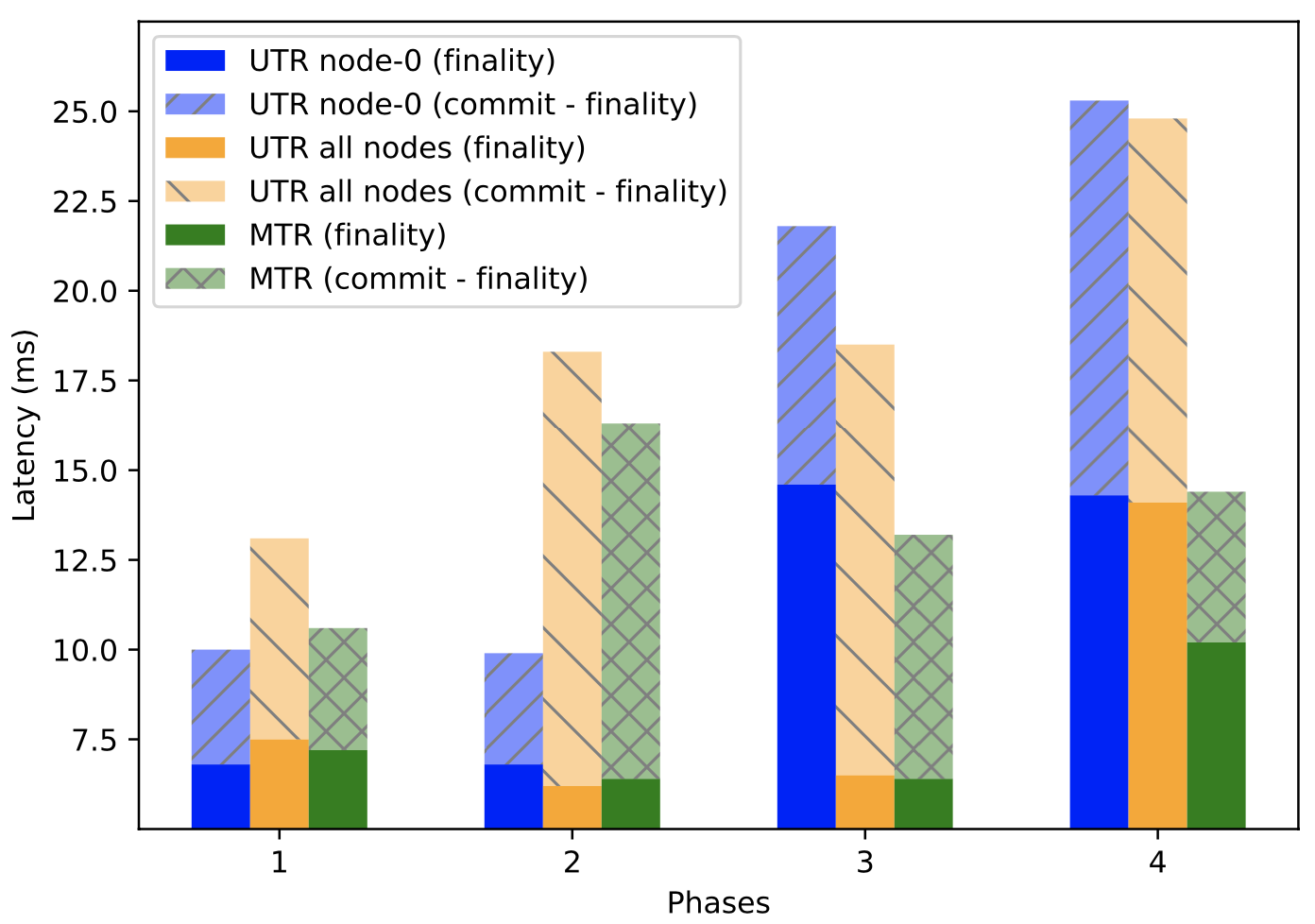}
\end{minipage}
\caption{Throughput and latency of different ticketing regimes under dynamic slowness. Each phase in the experiment lasts $30$ seconds where certain nodes are slowed down. \actrShortName achieves best optimal performance in all phases, demonstrating meritocracy and adaptivity to dynamic conditions. 
}
\label{fig:slow}
\end{figure*}

\subsection{Dynamic Heterogeneity}

We repeat the comparison of different ticketing regimes with heterogeneity, but vary the capabilities of nodes over time. We run four consecutive phases on four {\tt c6525-25g} instances, where each phase lasts for $30$ seconds. In each phase, we slow down certain nodes by idling a half of available CPU cores: in phase $1$, no nodes are slow; in phase $2$, only node $3$ is slow; in phase $3$, only node $0$ is slow; in phase 4, only node $1$ and node $2$ are slow. We compare \passtrShortName with only node $0$ permitted to propose, \passtrShortName with all nodes permitted in a round-robin rotation, and \actrShortName with ticket batches of $10$. We choose this batch size for tickets since it strikes the sweet spot between latency and throughput. 

Figure~\ref{fig:slow} summarizes the performance averaged during each phase. In the latency graph, the solid bar at bottom represents the latency for finality, while the entire bar represents the commit latency. \actrShortName achieves nearly optimal performance in all phases, demonstrating meritocracy and adaptivity to dynamic conditions. Conversely, assigning a single fixed proposer results in lower performance as the capability of the node is not static and thus it does not capture the ``fastest'' node of all time (because there is no such a node). The round robin scheme suffers from poor performance as well. 
In the practical deployment of a system, nodes could run fast and slow at times due to the uneven load imposed by the clients and the handling of different tasks (voting, verification, transaction execution, storage, etc.). With such dynamic heterogeneity, MTR can still adapt much better and mitigates the unnecessary performance loss compared to other approaches.

\begin{figure*}
\begin{minipage}{0.49\textwidth}
    \centering
    \includegraphics[width=0.9\columnwidth]{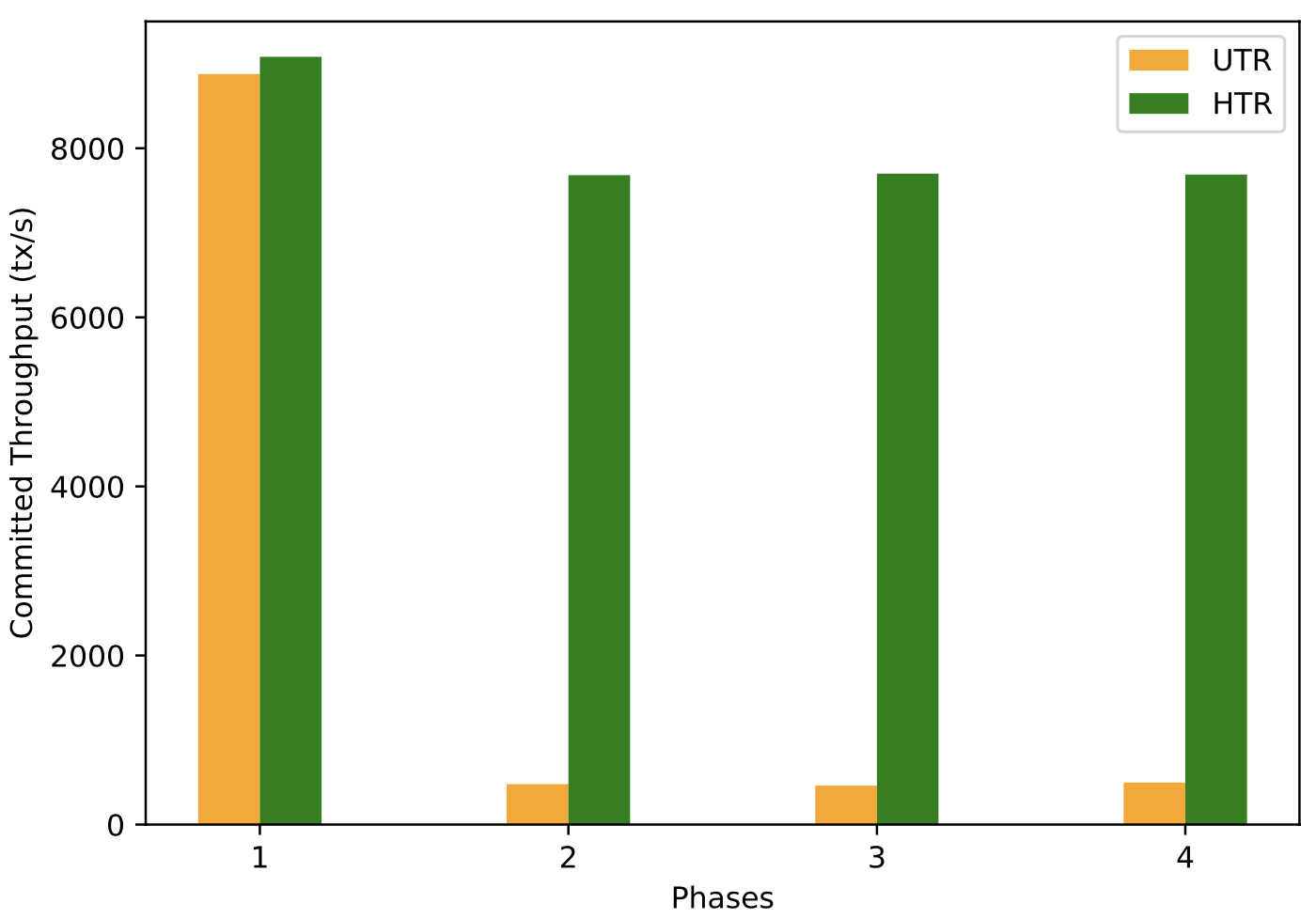}
\end{minipage}
\begin{minipage}{0.49\textwidth}
    \centering
    \includegraphics[width=0.9\columnwidth]{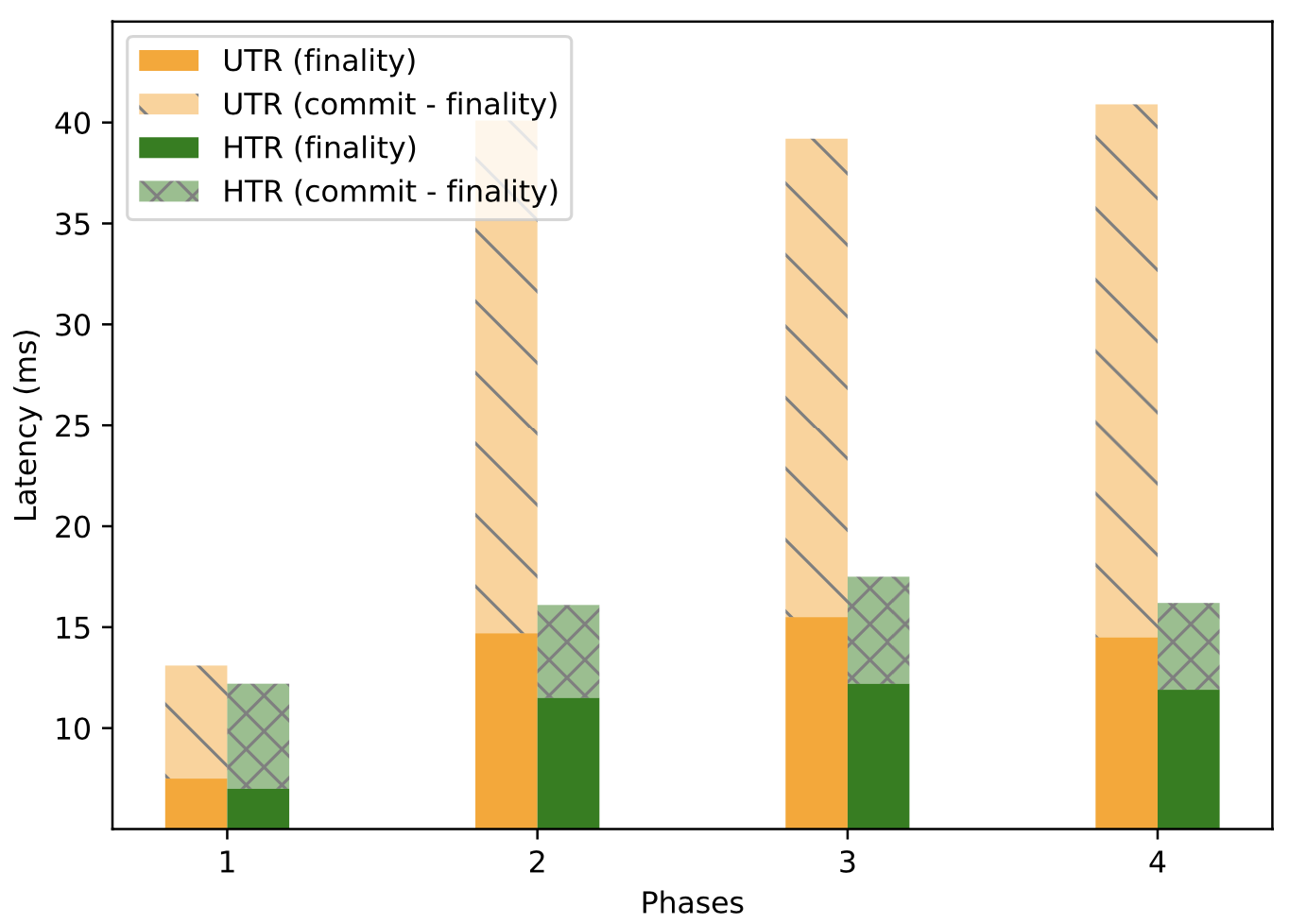}
\end{minipage}
\caption{Throughput and latency of different ticketing regimes under dynamic faults. Each phase in the experiment lasts $30$ seconds where a certain node is faulty and creates skipped slots. \ourtrShortName achieves superior performance in all phases, demonstrating fault resilience. 
}
\label{fig:dual}
\end{figure*}

\subsection{Dual-Mode Regime}

Compared to the \passtrShortName regime with all nodes permitted in round-robin, the main possible drawback for \actrShortName could come from a faulty centralized ticketing server. To address this, we proposed in \Cref{sec:protocol} a dual-mode ticketing regime, and our next experiment evaluates both single and dual-mode regimes with dynamic faults. 

We run four consecutive phases on four {\tt c6525-25g} instances, where each phase lasts for $30$ seconds. In each phase, we vary which node is faulty: in phase $1$, no nodes are faulty; in phase $2$, only node $3$ is faulty; in phase $3$, only node $0$ is faulty; in phase 4, only node $1$ is faulty. The faulty node will not propose slots even when it is assigned with tickets, thus creating skipped slots in the ledger. In all experiments, we use a simulated fallback consensus for simplicity (that is applied to all ticketing designs) and a $10$ms timeout to trigger the fallback consensus. We set the epoch length $L$ to be $50$ and allow $K=2$ concurrent epochs, which effectively sets $\gswshort$ to its maximum value $50$. %

Figure~\ref{fig:dual} summarizes the performance averaged during each phase, where we compare \ourtrShortName versus \passtrShortName with all nodes permitted in a round-robin rotation. Other ticketing regimes suffer from single point failures and are hence not presented in the figure. In the latency graph, the solid bar at bottom represents the latency for finality, while the entire bar represents the commit latency. \ourtrShortName exhibits superior performance in all phases, since the protocol is designed to bound the number of skipped slots. On the contrary, \passtrShortName has unbounded skipped slots, and thus suffers from major performance loss. This means with a dual-mode design, the performance can remain resilient in the case of a faulty ticketing server. Therefore, it is worthwhile to introduce a centralized role to ticketing, given that the faulty server scenario can be mitigated by switching back to a round-round regime and the faulty server is excluded from candidates.

\section{Related Work}
\label{sec:related}

\paragraph*{Ordering layer in shared log.} A shared log is an abstraction that addresses challenges in data consistency and fault tolerance. CORFU~\cite{balakrishnan2012corfu} pioneered the design by separating ordering from replication, introducing a centralized sequencer to manage ordering as a separate layer. The subsequent advance, Scalog~\cite{ding2020scalog}, replaces the centralized sequencer with a replicated counter service using Paxos to improve robustness, and aggregates requests through a tree structure to reduce communication. Recently, FlexLog~\cite{giantsidi2023flexlog} combines the tree structure of sequencer nodes and the single sequencer design in the normal path to further enhance efficiency, and allows for multi-record appends to concurrently append logs.

\paragraph*{Leader election in consensus.} 
Leader-based consensus algorithms (e.g.,~\cite{paxos,raft,pbft,hotstuff,tendermint}) have been widely used to address issues of coordination and agreement among distributed nodes. These protocols usually contain a leader election phase to ensure consistent decision-making on leader rotations and tolerate leader failures in both benign and malicious settings. The simplest leader election scheme, round-robin rotation~\cite{veronese2009spin,streamlet,hotstuff}, inherently guarantees fairness but may continuously elect faulty leaders despite evidence of their misbehavior. 

To address this issue, Ardvaark and follow up works~\cite{clement2008making, veronese2009spin,amir2010prime,aublin2013rbft, stathakopoulou2022state} temporarily eliminate from the leader candidate set  nodes that are suspected to be faulty with a blakclist mechanism.
Another line of work~\cite{miller2016honey, duan2018beat, chen2016algorand, david2018ouroboros} leverages randomized leader election to prevent a succession of faulty (corrupted) leaders.

Recent work~\cite{cohen2022aware} introduces a property called \textit{leader utilization} to bound the number of faulty leaders in crash-only executions after the global stabilization time (GST). This property has been incorporated into some blockchain systems~\cite{spiegelman2023shoal,tsimos2023hammerhead, babel2023mysticeti} to build a reputation system for leaders, enhancing election quality.

Our paper introduces a more flexible scheme which goes one step further than avoiding potentially faulty leaders. 
It allows for faster nodes to broadcast more often, such that networking imbalances have less impact on throughput.

\paragraph*{Orchestrating broadcasts in a DAG.}
Aleph~\cite{gkagol2019aleph} and folllow-up work~\cite{danezis2022narwhal,spiegelman2022bullshark,spiegelman2023shoal,shrestha2024sailfish} separate the consensus logic from broadcasting.
While they adopt a consensus leader election mechanism which falls in one of the categories discussed above, broadcasts are structured in a layered direct acyclic graph (DAG) such that each broadcast block has causal links (edges) to $2f+1$ blocks of the previous layer.
This broadcast layering is incompatible with the out-of-order finality which we examine in this work, as blocks in the DAG are finalized every (few) layers by the consensus leader, therefore, comparison with this line of work is out of scope.
Our paper presents an alternative, more agile approach to the layered ticketing which does not require all nodes to participate in the protocol as block broadcasters. 
Whether the insights of the flexible ticketing studied here can be applied to layered DAG broadcasts is an open problem.

\bibliography{ref}

\end{document}